\documentclass[]{article}
\usepackage{amsthm}
\usepackage{thmtools}
\usepackage{amsmath}
\usepackage{amssymb}
\usepackage{cases}
\usepackage{enumitem} 
\newtheorem{thm}{Theorem}[section]

\newtheorem*{openq}{Open question}
\theoremstyle{definition}
\declaretheorem[
  style=definition, 
  qed=$\blacksquare$, 
  numberwithin=section
]{definition}
\newtheorem{claim}{Claim}
\newtheorem{lemma}{Lemma}
\newtheorem{proposition}{Proposition}

\title{How far is Kolmogorov-Loveland randomness from Martin-L\"of randomness?}
\author{Tomislav Petrovi\'c}

\begin{document}

\maketitle

\begin{abstract}
We'll call a map from basic to clopen sets of binary sequences an \textit{encoding} if it preserves uniform Lebesgue measure and the disjointedness relation.
For some notion of randomness $R$ we can ask if there is a computable encoding $E$
such that for every non-Martin-L\"of-random (non-MLR) sequence  $\alpha$, the sequences in $E(\alpha)$ are non-$R$. 
Thus we can use $E$ and randomness test for $R$ to construct a randomness test for MLR.
We show that there is no such $E$ for partial computable randomness (PCR), but there is one for Kolmogorov-Loveland randomness.
Indeed, there is a total computable $E$ and a pair of total computable non-monotonic betting strategies such that on every sequence in $E(\alpha)$ at least one of the strategies wins whenever $\alpha$ is non-MLR.
\end{abstract}

\section{Introduction}
The history behind the open question of whether Kolmogorov-Loveland randomness (KLR) is equal to MLR is very long,  
see \cite[Chapter 9]{ShenBook}, for exposition.
MLR was introduced by P. Martin-L\"of in \cite{MartinLof} in terms of effective nullsets.
In \cite{Schnorr}, Schnorr defines randomness in terms of martingales,
making successive bets on the value of the next bit of a binary sequence.
He shows that MLR can be characterized
by a lower semicomputable martingale,
which he finds objectionable,
and introduces computable randomness (CR) that uses totally computable martingales,
For computable martingales, making a bet on a bit immediately determines the capital for both outcomes, 
as opposed to capital being forever approached from below 
without knowing how well the current value
approximates the value of capital in the limit.
If we allow martingales to be partially computable, we obtain PCR, which strictly implies CR, \cite{Klaus}.
In \cite{Muchnik} 
KLR was defined
by lifting the restriction that the bet must be made on the next bit, and instead allowing the position of the bit 
to be chosen.
In \cite[Theorem 9.1]{Muchnik} it is shown that sequences with (ML)randomness deficiency 
lower bounded by an unbounded computable function are non-KLR,
and by \cite[Theorem 9.5]{Muchnik}, KLR strictly implies implies PCR.
More generally,  \cite[Theorem 12]{Merkle} shows that if a sequence can be computably split into
two non-MLR  sequences then it is non-KLR.
In theorem \ref{KLR=_EMLR} we'll show that we can computably encode a sequence, and use KLR test on the encoding,
in order to test whether the sequence is MLR.
This gives us a characterization of MLR that does not use lower semicomputability.
In theorem \ref{thm_no_E_for_PCR}
we'll show that not every notion of randomness has this property, for PCR such an encoding cannot exist.
Whether KLR is in this respect unique is an open question.
\newcommand{\pos}{\mathtt{pos}}
\section{Notation and definitions}
The notation is mostly standard.
A union of disjoint sets is denoted with symbol $\sqcup$.
The negation of bit $v$ is denoted with $\bar{v}$.
A finite sequence of bits is called a (binary) \textit{string}.
The empty string is denoted with $\epsilon$.
We'll call an infinite sequence of bits a (binary) \textit{sequence}.
The $n$-th bit of a sequence $\alpha$ is denoted $\alpha[n]$.
If a string $s$ is a prefix of a string or sequence $s'$ we write $s\preceq s$, we'll also say that $s'$ extends $s$. 
A prefix of a sequence $\alpha$ of length $n$ is denoted $\alpha[1:n]$, a segment of the sequence
between position $n$ and $m$ is denoted $\alpha[n:m]$.
A concatenation of strings $a$ and $s$ is written $as$.
The length of a string $a$ is denoted $|a|$. 
A basic (open) set of sequences extending a string $a$ is denoted with $\tilde{a}$.
A (possibly infinite) union of basic sets is called an open set. 
The complement of an open set is a closed set.
A finite union of basic sets is called clopen (it is both open and closed).
For a set of strings $X$, we'll denote with $\tilde X = \bigcup_{x\in X} \tilde x$.
The subset of a set of strings $X$ that extends a string $x$ we denote with $X|x$.
A finite assignment is a finite sequence of pairs of natural numbers and bits. 
If the pair $(p,v)$ is in a finite assignment $a$, we say that $a$ assigns value $v$ to position $p$.
The length of a finite assignment $a$ is the number of pairs in the sequence and is denoted with $|a|$.
The set of positions to which a finite assignment assigns a value is denoted with $\pos(a)$.
We'll say that a binary sequence $\alpha$ is consistent with a finite assignment $a=(p_1,v_1),\dots,(p_n,v_n)$
if for all $i\leq n$, $\alpha[p_i] = v_i$.
Two finite assignments are consistent if for every position to which they both assign a value, they assign the same value.
Two finite assignments are equivalent if they assign the same values to the same positions (regardless of the order of the pairs in the assignments). For equivalent $f,a$ we write $f\equiv a$.
A finite assignment $f$ and a string $s$ are equivalent if $\pos(f)=[1,|s|]$, and $f$ assigns $s[p]$ to every position $p\in [1,|s|]$.
A finite assignment $f$ refines a finite assignment $a$ if  $f$ assigns to $\pos(a)$ the same values as $a$,
and possibly assigns some values to positions outside  $\pos(a)$.
The (clopen) set of all sequences consistent with a finite assignment $a$ is denoted with $\tilde{a}$.
For consistent $f,a$, $\tilde f$ intersects $\tilde a$. For equivalent $f,a$, $\tilde f = \tilde a$, and
$\tilde f \subseteq \tilde a$ when $f$ refines $a$.
Uniform Lebesgue measure is denoted with $\lambda$, for a string or a finite assignment $a$,
we have $\lambda(\tilde{a})=2^{-|a|}$.
In this paper we will only consider randomness with respect to the uniform Lebesgue measure.

\begin{definition}
We'll say that a function $\phi$ from some domain $X$ to reals is lower semicomputable iff there is a computable
function $f$ from $X,\mathbb{N}$ to rationals, such that $f(x,n+1)\geq f(x,n)$ and $\phi(x)=\lim_n f(x,n)$.
\end{definition}

\begin{definition}
A (possibly partial) function $\mu$ from strings to non-negative reals is called a (probability) \textit{measure} iff $\mu(\epsilon)=1$ and for any string $a$ and bit $v$,
when $\mu(av)$ is defined, then $\mu(a\bar{v})$ is also defined and $\mu(a)=\mu(a0)+\mu(a1)$.
If  $\mu(a)\geq\mu(a0)+\mu(a1)$, $\mu$ is called a \textit{semimeasure}.
The ratio between  measures $\mu$ and $\lambda$ we denote with $\dot\mu$ and call a \textit{martingale}, .
The interpretation is that of a gambler's betting strategy, 
where, having bet on the first $|a|$ bits of a sequence extending $a$,
 a bet is placed on the value of the next bit by
dividing the \textit{capital} $\dot\mu(a)$ between
two outcomes, $a0,a1$.  
For outcome $av$, the capital put on $a\bar v$ is lost, and the capital put on $av$ is doubled.
The martingale \textit{wins} on sequence $\alpha$ if 
the capital over the sequence of bets is unbounded,  
$\sup_n \dot\mu(\alpha{1:n}) = \infty$. 

If $\mu$ is a semimeasure, $\dot\mu$ is called a \textit{supermartingale}.
We will look at \textit{partial computable martingales} where $\mu$ is partial computable,
and at  \textit{lower semicomputable supermartingales} where $\mu$ is a lower semicomputable semimeasure.
\end{definition}

\begin{definition}
Let $\pi$ be a (possibly partial) function
from strings to natural numbers such that 
if $\pi(a)$ is defined then it is also defined for all prefixes of $a$,
and $\pi(a)\neq\pi(a')$, for any prefix $a'$ of $a$.
Such $\pi$ is called a \textit{scan rule}, and we interpret the output as the position of bit to be scanned next.
For a string $s$ consisting of $n$ bits, we'll call the finite assignment 

$(\pi(\epsilon),s[1]),(\pi(s[1]),s[2]),\dots,(\pi(s[1:n-1]),s[n])$ \textit{$\pi$'s finite assignment for $s$} and denote it with $\mathtt{fa}_\pi(s)$.
\end{definition}

\begin{definition}
A \textit{betting strategy} is a pair of scan rule and measure.
Let $A=(\pi,\mu)$ be a betting strategy.
We can view martingales as betting strategies where the scan rule is in-order, $\pi(a)=|a|+1$.
When the scan rule is not restricted to be in order, to highlight this, we say that the betting strategy is \textit{nonmonotonic}.
We will be interested in betting strategies where $\pi$ is not necessarily in-order, and both $\pi$ and  $\mu$ are total computable, and we'll call these strategies \textit{total computable nonmonotonic betting strategies}.

For strategy $A$, we'll write $\mathtt{fa}_A$ and mean $\mathtt{fa}_\pi$. Instead of writing $\pos(\mathtt{fa}_A(s))$ we write $\pos_A(s)$.

Let $\alpha,\sigma$ be sequences such that for all $n$, $\alpha$ is consistent
with $\mathtt{fa}_A(\sigma[1:n])$.
We'll say that $A$ has \textit{capital} 
$\dot\mu(\sigma[1:n])=\frac{\mu(\sigma[1:n])}{\lambda(\tilde{\mathtt{fa}_A(\sigma[1:n]))}}=2^n\mu(\sigma[1:n])$
when betting on $\alpha$ after $n$ bets, and
 wins on $\alpha$ if $\sup_n \dot\mu(\sigma_{1:n}) = \infty$. 

We will be defining strategies recursively.
If a strategy is defined on a string $x$, but not on $x0,x1$, we'll say that $x$ is a \textit{leaf} of the strategy. 
Defining a bet for a leaf $x$ consists of defining the scan rule for $x$, and measure for the two new
leafs $x0,x1$. 
\end{definition}

\begin{definition}[PCR]
We'll say that a sequence is \textit{partially computably random} if no partial computable martingale wins on the sequence.
\end{definition}

\begin{definition}[KLR]
We'll say that a sequence is \textit{Kolmogorov-Loveland random} if no total computable nonmonotonic betting strategy wins on the sequence.
\end{definition}

\begin{definition}[MLR]
We'll say that a sequence is \textit{Martin-L\"of random} if no lower semicomputable supermartingale wins on the sequence. 
A lower semicomputable supermartingale is universal if it  multiplicatively dominates all other lower semicomputable supermartingales.
We pick some universal lower semicomputable supermartingale  and denote it with $u$.  
\end{definition}
That a universal lower semicomputable supermartingale exists was shown in \cite{Schnorr}.

\begin{definition}
We'll call a map from basic to clopen sets of binary sequences an \textit{encoding} if it preserves uniform Lebesgue measure and the disjointedness relation, that is, for any string $a$,  $\lambda(E(\tilde{a}))=\lambda(\tilde{a})$ and $E(\tilde{a})=E(\tilde{a0})\sqcup E(\tilde{a1})$.
For an encoding $E$ and sequence $\alpha$, instead of writing $\bigcap_n E(\tilde\alpha[1:n])$ we'll  just write $E(\alpha)$.
\end{definition}

\section{No computable encoding for PCR}
The proof is similar to the textbook proofs that MLR strictly implies PCR, \cite[Chapter 7.4.2]{Downey}, \cite[Chapter 7.4]{Nies}. 
We'll have an added level of indirection brought by mapping basic sets with an encoding, and we will, like in \cite[Theorem 9.5]{Muchnik}
be constructing a lower semicomputable semimeasure. Apart from having an encoding, our goal is simpler, as we are not concerned with the exact bound on complexity of initial prefixes of the constructed sequence.

\begin{thm}\label{thm_no_E_for_PCR}
There is no computable encoding $E$ such that for every non-MLR sequence $\alpha$ the sequences in $E(\alpha)$ are non-PCR. 
\end{thm}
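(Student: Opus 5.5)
Fix a computable encoding $E$. The plan is a diagonalization: construct a single non-MLR sequence $\alpha$ such that every sequence in $E(\alpha)$ is PCR. Since $E$ preserves measure and $E(\tilde a)=E(\tilde{a0})\sqcup E(\tilde{a1})$, the sets $E(\tilde\alpha[1:n])$ form a decreasing chain of clopen sets of measure $2^{-n}$. Hence $E(\alpha)$ is a nonempty closed null set, and we must make all of its points PCR at once. We will build $\alpha$ together with a lower semicomputable semimeasure $\nu$ satisfying $\nu(\alpha[1:n])2^n\to\infty$. As in \cite[Theorem 9.5]{Muchnik}, this makes the supermartingale $\dot\nu$ win on $\alpha$, so $\alpha$ is non-MLR. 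The construction mirrors the textbook proof that MLR strictly implies PCR \cite[Chapter 7.4.2]{Downey}, with one extra step: every basic set is pulled back through $E$.

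Enumerate all partial computable martingales $\dot\mu_1,\dot\mu_2,\dots$ and give $\dot\mu_e$ a capital bound $c_e=2^{e}$. We build $\alpha$ in blocks. At a stage where $a=\alpha[1:n]$ is fixed, the clopen set $C=E(\tilde a)$ has finitely many basic pieces, computable from $E$. For a candidate extension $av$ ($v\in\{0,1\}$), we say $\mu_e$ \emph{exceeds its bound} on $E(\tilde{av})$ if $\dot\mu_e$ reaches capital above $c_e$ on some sequence in $E(\tilde{av})$. By the martingale (Kolmogorov) inequality, the set where $\dot\mu_e$ exceeds $c_e$ has measure at most $c_e^{-1}$. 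The same bound holds relative to $C$ after rescaling by the capital at the entry strings, which is bounded when $\mu_e$ was respected earlier.

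Choose $\alpha$ as the leftmost path avoiding, for each $e$, the open set $U_e$ of $\alpha$-extensions whose $E$-image meets a region where $\dot\mu_e$ exceeds $c_e$. Restricting to the first $k$ martingales at length roughly $n_k$, the measure of $E^{-1}(\bigcup_{e\le k}\cdot)$ stays below $\sum_e 2^{-e}<1$. This works because measure preservation of $E$ lets us transfer the bound on the $E$-side back to the $\alpha$-side. So a surviving path exists, and every point of $E(\alpha)$ keeps each $\dot\mu_e$ bounded by $c_e$ (up to the finitely many $e$ handled late, which still keep bounded capital). Hence all points of $E(\alpha)$ are PCR.

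To make $\alpha$ non-MLR, we follow \cite[Chapter 7.4]{Nies}. The construction is computable in the limit, since it needs divergence information of partial martingales. However, the "bad" sets $U_e$ are $\Sigma^0_1$ uniformly: a bet being made and the capital exceeding $c_e$ are both witnessed by finite computation. The path $\alpha$ is therefore the leftmost path of a $\Pi^0_1$ class $P=2^\omega\setminus\bigcup_e U_e$. We arrange that inside each block of length $m_k$ the surviving measure is within factor $2^{-k}$ of the number of live extensions needed. Then the semimeasure $\nu(x)=\lambda(\tilde x\cap P_s)$, suitably normalized by the leftmost-path weighting, becomes lower semicomputable from above on the complement. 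The cleaner route is to take $\nu(x)=2^{-K(x)}$-style weighting: show $\alpha$ is a leftmost path of a $\Pi^0_1$ class with $\lambda(P\cap\tilde{\alpha[1:n]})$ decaying slowly while $P$ is thin. The main obstacle is exactly this step, ensuring $\alpha$ is non-MLR. The leftmost path of a positive-measure $\Pi^0_1$ class can be MLR (e.g.\ $\Omega$-like), so the blocks must be designed so that the avoidance uses only few bits. In each block, restrict the choice to a computable set of $2^{k}$ candidates out of $2^{m_k}$ with $m_k\gg k+e$. Then $\alpha$ lies in an effectively small set and $\nu$ (uniform on candidates) witnesses a deficiency of $m_k-k\to\infty$, which is computable lower-bounded and hence non-MLR. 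Verifying that $2^k$ candidates suffice to dodge the finitely many martingales after the $E$-transfer is where I expect the real work.
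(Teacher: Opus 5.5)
\textbf{There is a genuine gap, in fact two.}

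First, your intermediate goal (a non-MLR $\alpha$ with \emph{every} point of $E(\alpha)$ PCR) is false for general computable $E$, and the step meant to deliver it is invalid. Take $E(\tilde a)=\{\rho:\rho[2i]=a[i]\text{ for all } i\le |a|\}$. This is a computable encoding, yet for every $\alpha$ the set $E(\alpha)$ contains the sequence that agrees with $\alpha$ on even positions and is $0$ on all odd positions. A total computable martingale wins on that sequence. So for this $E$ your sets $U_e$ cover everything.

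The error is in ``measure preservation lets us transfer the bound back''. Preservation only gives $\sum_{|a|=n}\lambda(E(\tilde a)\cap V)\le\lambda(V)$. That controls the $a$ whose image lies \emph{mostly} inside $V$, not the $a$ whose image merely \emph{meets} $V$. The theorem only needs one PCR point of $E(\alpha)$, and the paper follows a single $\beta$. It maintains $(a,b)$ with $\tilde b\subseteq E(\tilde a)$ and extends $b$ using the averaging property of the partial martingales, so that the weighted sum of their running maxima stays below $2$. It then goes to a length at which $\tilde{b''}\subseteq E(\tilde{a'})$ for a unique $a'\succ a$ of the prescribed length, and continues from $(a',b'')$.

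Second, the non-MLR part, which you rightly call the main obstacle, is not resolved, and your proposed fix cannot work. Suppose each block is chosen from a computable set of $2^k$ out of $2^{m_k}$ candidates. Then $\alpha$ lies in a null $\Pi^0_1$ class, so $\alpha$ is not even Kurtz random and hence not computably random. For the identity encoding $E(\tilde a)=\tilde a$ this contradicts the requirement that $\alpha\in E(\alpha)$ be PCR. If instead the candidates depend on which partial martingales diverge, the uniform measure on candidates is no longer lower semicomputable.

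The missing idea is to \emph{pay} for that divergence information. In the paper a guess $s_{n+1}$ is one of two things:
\begin{itemize}
\item $\epsilon$, meaning ``$\mu_{n+1}$ is defined on all relevant extensions of $b$''; then $\mu_{n+1}$ receives a positive weight $q_{n+1}$;
\item a nonempty string with $\mu_{n+1}(bs_{n+1})$ undefined; then $b$ is extended through $bs_{n+1}$, so $\mu_{n+1}$ is undefined on $\beta$.
\end{itemize}
A partial computable $F$ maps guess sequences to pairs $(a_n,b_n)$. Each guess extends $a$ by $2|s_{n+1}|+2$ bits while being charged weight $2^{-2|s_{n+1}|-1}$. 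Since $\sum_s 2^{-2|s|-1}=1$, $\gamma(a)=\sum_{x\in G(a)}w(x)$ is a lower semicomputable semimeasure no matter what $F$ does on wrong guesses. Along the true guess sequence, $\dot\gamma(a_n)\ge 2^n$.
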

\begin{proof}
Suppose there is such an $E$.
We'll construct a lower semicomputable semimeasure $\gamma$ so that $\dot\gamma$ wins on some sequence $\alpha$ such that $E(\alpha)$ contains a PCR sequence.
Let $\mu_1,\mu_2,\dots$ be a computable enumeration of partial computable measures.
Let $\hat \mu$ denote the largest capital achieved on a string so far, $\hat \mu (s) = \sup_{\ell\leq |s|} \dot\mu (s[1:\ell])$.

To construct $\gamma$, we first construct two "helper" functions $F,Q$.
$F$ is a partial computable function whose input is
a finite sequence of strings $x=s_1,\dots,s_n$ and outputs a string pair $(a,b)$
where $a$ will have high capital achieved by $\gamma$,
 and $b$  has bounded capital achieved by the first $n$ partial computable martingales
 and is a subset of$E(\tilde a)$.
$Q$ is a partial computable function from finite sequences of strings to rationals that computes bounds on
achieved capital by partial computable martingales.
The strings in the pair $F(x[1:i+1])$ extend the strings in the pair $F(x[1:i])$ and we'll construct
a pair of sequences $(\alpha,\beta)$ such that $\beta\in E(\alpha)$ and $\gamma$ wins on $\alpha$,
whereas every partial computable martingale achieves only bounded capital on $\beta$.

For empty sequence of strings, $\xi$, we define $F(\xi)=(\epsilon,\epsilon)$ and $Q(\xi)=0$.
Let $x$ be a finite sequence of strings, $x=s_1,\dots,s_n$,
and let $x'$ be $x$ extended by string $s_{n+1}$,  
$x'=s_1,\dots,s_n,s_{n+1}=x,s_{n+1}$.
If $F(x)$ or $Q(x)$ is undefined then $F(x')$ and $Q(x')$ are both undefined (for any $s_{n+1}$).
Suppose both $F(x)$ and $Q(x)$ are defined.
Let $(a,b)=F(x)$.
For $i\leq n$, denote with $q_i=Q(x[1:i])$. 
Denote with $C_n(b)$ the weighted sum of achieved capitals
by the first $n$ partial computable martingales,
 $C_n(b) = \sum_{i\leq n} q_i \hat \mu_i(b)$.
If for $F(x),Q(x)$ the condition
\begin{equation}
C_n(b)< 2\text{ and }\tilde{b}\subseteq E(\tilde a). 
\label{CA}
\end{equation}
is not satisfied, $F(x'),Q(x')$ are undefined (for any $s_{n+1}$).

We'll now define how $F(x')$ and $Q(x')$ are computed, when $F(x),Q(x)$ are defined and condition \eqref{CA} is satisfied. 
Based on the string $s_{n+1}$, we'll make an additional assumption $H(x')$.
\begin{numcases}{H(x') = }
s_{n+1}=\epsilon, &for all  $b'\succeq b$, if $C_n(b') < 2$, $\mu_{n+1}(b')$ is defined\label{allb} \\
s_{n+1}\neq\epsilon, &$\mu_{n+1}$ is not defined on $b'=bs_{n+1}$ and $C_n(b') < 2$\label{existsb}
\end{numcases}

If $s_{n+1}\neq\epsilon$,
we set $q_{n+1} = Q(x') = 0$. 
Let $b' = bs_{n+1}$. By assumption $H(x')$, $C_{n+1}(b')<2$.
We then compute every measure $\mu_i, i\leq n$ for which $q_i>0$ on all continuations of $b'$ until we find
some $b''$ such that there is a continuation of $a$, $a'$ that is $2|s_{n+1}|+2$ bits longer than $a$ and
$C_{n+1}(b'')\leq C_{n+1}(b')$ and $\tilde{b''}\subseteq E(a')$. If we find such $b''$, we set $F(x')=(a',b'')$.

If $s_{n+1}=\epsilon$, 
we define $\frac{1}{q_{n+1}}=2\frac{ \hat c_{n+1}(b)}{(2-C_n(b))}$ and set $Q(x')=q_{n+1}$.
We have $C_{n+1}(b)=C_n(b) + \frac{2-C_n(b)}{2} < 2$.
We then compute every measure $\mu_i, i\leq n+1$ for which $q_i>0$ on all continuations of $b'$ until we find
some $b''$ such that there is an $a'$ that extends $a$ by $2$ bits and
$C_{n+1}(b')\leq C_{n+1}(b)$ and $\tilde{b'}\subseteq E(a')$. If we find such $b'$, we set $F(x')=(a',b')$.

\begin{claim}\label{cl_H}
If all of the assumptions $H(s_1), H(x[1:2]),\dots, H(x),H(x')$ are true, for some string $s_{n+1}$ assumption $H(x')$ is true.
\end{claim}
\begin{proof}
There either exists some $b'\succ b$ such that $\mu_{n+1}$ is undefined and $C_n(b')<2$ (case \eqref{existsb}) or
not (case \eqref{allb}). Therefore, for some string $s_{n+1}$ assumption $H(x')$ is true.
\end{proof}

\begin{claim}\label{cl_v}
If all of the assumptions $H(s_1), H(x[1:2]),\dots, H(x),H(x')$ are true,
then for any $b'\succeq b$, such that $C_{n+1}(b')<2$ there is some bit $v$
such that $C_{n+1}(b'v)<2$.
\end{claim}
\begin{proof}
Let $I$ be the subset of indices $i\leq n+1$ such that $s_i\neq\epsilon$ (or, equivalently, $q_i>0$).
Order $I$ in increasing order, $I=\{i_1,\dots,i_z\}$, and denote
$(a_1,b_1)=F(x'[1:i_1]),\dots,(a_z,b_z)=F(x'[1:i_z])$.
We have that $b_1\prec\dots\prec b_z\preceq b\preceq b'$.
Since $C_{i_1-1}$ is always zero, condition $H(x[1:i_1])$ implies that $\mu_{i_1}$ is defined on all continuations of $b_1$,
and in particular on $b'0$ and $b'1$. Then $C_{i_1}(b'0)$ and $C_{i_1}(b'1)$ are both defined.
Since $C_{i_1}(b')\leq C_{n+1}(b') < 2$  and  $\frac{C_{i_1}(b'0) + C_{i_1}(b'1)}{2}=C_{i_1}(b')$,
we have that for at least one $v$, $C_{i_1}(b'v)<2$.
But then, by $H(x[1:i_2])$ , $\mu_{i_2}$ is defined on $b'v$, and by definition of measure it is also defined on 
$b'\bar{v}$. Again, for at least one $v$, $C_{i_2}(b'v)<2$.
And so on...
We have that $C_{n+1}(b'v)$ is defined for both values of $v$, and for at least one $C_{n+1}(b'v)<2$.
\end{proof}

\begin{claim}\label{cl_indstep}
If all of the assumptions $H(s_1), H(x[1:2]),\dots, H(x),H(x')$ are true, 
then both $F(x')$ and $Q(x')$ are defined,
and satisfy condition \eqref{CA}.
\end{claim}
\begin{proof}
In case $s_{n+1}$ is a nonempty string, $Q(x')=0$ and 
we have $C_{n+1}(bs_{n+1})<2$.
In the other case, $s_{n+1}$ is an empty string, $Q(x')>0$ and $C_{n+1}(b)<2$.
In both cases $Q(x')$ is defined.
Denote $b'=bs_{n+1}$.
In both cases $b\preceq b'$  and $C_{n+1}(b')<2$.
Let $\ell$ be the smallest number such that for every string $s$ of length $\ell$,
and any $a'\succ a$ with $|a'|-|a|=2|s_{n+1}|+2$, we have that $\tilde{s}$ is either a subset of $E(a')$ or disjoint from it.
By repeated use of claim \ref{cl_v} we can find $b''\succeq b'$ of length $\ell$
such that $C_{n+1}(b'')<2$.
We have that $\tilde{b''}$ is a subset of $E(a')$ for some
$a'\succ a$ with $|a'|-|a|=2|s_{n+1}|+2$. The pair $(a',b'')$ satisfies condition \eqref{CA}.
The value $F(x')$ is the first such pair that is found.
\end{proof}

For $F(\xi),Q(\xi)$, condition \eqref{CA} is satisfied.
By claims \ref{cl_H}, \ref{cl_indstep} and induction, 
there is an infinite sequence of strings $\chi = s_1,s_2,\dots$ such that
for all $n$, $H(\chi[1:n])$ is true and $F(\chi[1:n]), Q(\chi[1:n])$ are defined and satisfy \eqref{CA}.
We'll say that such $\chi$ is \textit{good}.
Denote $(a_n,b_n)=F(\chi[1:n])$.
Let the pair $(\alpha,\beta)$ denote the binary sequences such that for all $n$, 
$a_n\prec \alpha$ and $b_n\prec \beta$.
We'll say that the pair $(\alpha,\beta)$ is \textit{good}.
\begin{claim}
If $(\alpha,\beta)$  is a good pair, $\beta$ is PCR and $\beta\in E(\alpha)$.
\end{claim}
\begin{proof}
Clearly, $\beta\in E(\alpha)$.
The $i$-th partially computable measure $\mu_i$ is either undefined on $\beta$, or
is less than $C_i(\beta)/q_i$ which is less than $2/q_i$, and $\dot\mu_i$ does not win on $\beta$.
Therefore no partially computable martingale wins on $\beta$.
\end{proof}

All that remains to show is that $\alpha$ is non-MLR.
We define a function that assigns rational weights to sequences of strings. 
$w(s_1,\dots,s_n) = \prod_{i\leq n} 2^{-2|s_i|-1}$.
Let $G(a)$ denote the set of sequences of strings such that $s_1,\dots,s_n,s_{n+1}$ is in the set iff
$a'\prec a \preceq a''$ where $(a',b')=F(s_1,\dots,s_n)$ and $(a'',b'')=F(s_1,\dots,s_n,s_{n+1})$.
We define $\gamma(a)=\sum_{x\in G(a)} w(x)$.

\begin{claim}
$\gamma$ is a lower semicomputable semimeasure.
\end{claim}
\begin{proof}
Since $F$ is partially computable, $G(a)$ is computably enumerable and $\gamma$ is lower semicomputable.
We still need to show that for all $a$, $\gamma(a) \geq \gamma(a0) + \gamma(a1)$.
For any sequence of strings $x$, summing weights over all single-string continuations of $x$,
we have $$\sum_{s\in\{0,1\}^*} w(x,s)=w(x)\sum_{s\in\{0,1\}^*} 2^{-2|s|-1}=w(x)\sum_{\ell\geq 0} 2^\ell2^{-2\ell-1}=w(x)$$.

Let $x\in G(a)$ and $(a',b')=F(x[1:|x|-1])$, $(a'',b'')=F(x)$.
We split $G(a)$ into two sets, $G^\prec(a)$ and $G^=(a)$.
If $a'\prec a \prec a''$, then $x\in G^\prec(a)$.
On the other hand,  if $a'\prec a = a''$, then $x\in G^=(a)$.
For $x\in G^\prec(a)$, either $a'\prec a0 \preceq a''$ or $a'\prec a1 \preceq a''$ (but not both),
so either $x\in G(a0)$ or $x\in G(a1)$.
For $x\in G^=(a)$, $x$ is neither in $G(a0)$ nor in $G(a1)$.
Let $x\in G^=(a)$ and $s$ any string. 
Suppose  $F(x,s)$ is defined and denote $(a^*,b^*)=F(x,s)$. 
Then again, either 
$a''\prec a0 \preceq a^*$ or $a''\prec a1 \preceq a^*$ (but not both),
so either $x,s\in G(a0)$ or $x,s\in G(a1)$.
We have that $G(a0)$ and $G(a1)$ are disjoint,
and $\gamma(a0) + \gamma(a1) = \sum_{x\in G^\prec(a)} w(x) + \sum_{x\in G^=(a)}\sum_{s\in\{0,1\}^*, F(x,s)\text{is defined} } w(x,s)$.
This is less than the sum
$\sum_{x\in G^\prec(a)} w(x) + \sum_{x\in G^=(a)}\sum_{s\in\{0,1\}^*} w(x,s)$.
This  is equal to
$\sum_{x\in G^\prec(a)} w(x) + \sum_{x\in G^=(a)} w(x) = \sum_{x\in G(a)} w(x) = \gamma(a)$.
\end{proof}

\begin{claim}
If $(\alpha,\beta)$  is a good pair, $\alpha$ is non-MLR.
\end{claim}
\begin{proof}
We have that $|a_n|=\sum_{i\leq n} 2|s_i|+2$, and $\gamma(a) \geq w(\chi[1:n])= \prod_{i\leq n} 2^{-2|s_i|-1}$.
Then $\gamma(a_n)/\lambda(a_n)\geq2^n$, and $\dot\gamma$ wins on $\alpha$.
\end{proof}

We have shown that $\alpha$ is non-MLR, and $E(\alpha)$ contains $\beta$ that is PCR, a contradiction.
\end{proof}

\section{A computable encoding for KLR}
\begin{thm}\label{KLR=_EMLR}
There is a computable encoding $E$ and two total computable nonmonotonic betting strategies $A,B$ such that for every non-MLR sequence $\sigma$ and every $\rho$ in $E(\sigma)$, $A$ or $B$ win on $\rho$. 
\end{thm}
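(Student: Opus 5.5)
The plan is to reduce to the splitting phenomenon behind \cite[Theorem 12]{Merkle}: if a sequence can be computably split into two halves that are both non-MLR, then it is non-KLR, witnessed by two total computable nonmonotonic strategies, one betting on each half while scanning the other half as advice. We cannot simply take $E(\sigma)=\sigma$ and split $\sigma$, since a non-MLR $\sigma$ may have an ML-random half (e.g.\ $0^\omega\oplus$ random). So the encoding must \emph{spread} the non-randomness of $\sigma$ over both halves of $\rho$. Because $E$ preserves measure, the finite-level map is essentially a computable measure-preserving bijection of prefixes, so it can only redistribute information, not add any.

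The concrete choice I would make is a block encoding. Cut $\sigma$ into computable blocks $B_1,B_2,\dots$ of rapidly increasing lengths. Write $B_k$ (possibly XOR-masked by a computable function of the preceding blocks) into both the odd part $\rho_0$ and the even part $\rho_1$ of $\rho$, in such a way that each of $\rho_0,\rho_1$ alone determines $\sigma$ up to a small amount of information. Measure is preserved if $E$ is defined level by level as a bijection between $\{0,1\}^{n}$ and unions of cylinders of total measure $2^{-n}$. Disjointness also holds automatically, since $E(\tilde a)$ is the image of $\tilde a$ under a measurable bijection.

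For the strategies, let $u$ be the universal supermartingale and fix the enumeration of the ML-test $U_c=\{u\geq 2^c\}$. Strategy $A$ bets only on $\rho_0$, using bits of $\rho_1$ that it scans without betting. From $\rho_1$ it decodes a guess about $\sigma$, namely a level $c$ and a stage $t$ by which the prefix of $\sigma$ enters $U_c$. It then bets on $\rho_0$ the measure conditioned on $U_c[t]$, as in the proof of \cite[Theorem 9.1]{Muchnik}, where a computable deficiency bound yields a total betting strategy. Strategy $B$ does the same with the roles of $\rho_0$ and $\rho_1$ swapped. Totality is forced by always running the enumeration only for the stage guessed from the advice half. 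A wrong guess then merely costs a bounded fraction of capital, which is controlled by weighting the guesses $2^{-k}$ across blocks.

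The main obstacle is the interplay between totality and lower semicomputability. $A$ must decide its bets without waiting for $u$ to converge, and the stage information is supplied by the other half only when that half is ``atypical'' in the right way. This is exactly why two strategies are needed: the argument becomes a case split on whether the advice half is itself compressible relative to the stage function. In one case $A$ wins, in the other $B$ wins, as in Merkle's argument. I would first try to prove the key lemma: for $\rho\in E(\sigma)$, if $\rho_0$ is ML-random relative to the information $B$ can extract, then $\rho_1$ carries a computable-time witness of $\sigma\in U_c$ for infinitely many $c$. Everything else is the routine verification that the bets define total computable measures and that the capitals are unbounded.
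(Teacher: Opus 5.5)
There is a genuine gap, and it is in the encoding. The encoding is the one ingredient your reduction to the splitting theorem of \cite{Merkle} actually needs.

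Writing each block $B_k$, masked or not, into both $\rho_0$ and $\rho_1$ fixes every bit of $\sigma$ twice. Then $\lambda(E(\tilde a))$ is about $2^{-2|a|}$, not $2^{-|a|}$. No cleverer scheme reaches your stated goal either. For any encoding, $E(\tilde\epsilon)$ is clopen of measure $1$, so for each $n$ the sets $E(\tilde a)$ with $|a|=n$ partition Cantor space. Hence the decoder $D$, sending $\rho$ to the unique $\sigma$ with $\rho\in E(\sigma)$, is computable and satisfies $D_*\lambda=\lambda$. Under $\lambda$ the halves $\rho_0,\rho_1$ are independent, while $D(\rho)[1:n]$ carries $n$ bits. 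So the information the two halves carry about $\sigma[1:n]$ adds up to at most $n$ bits, and they cannot \emph{each} determine $\sigma$ up to a small amount of information.

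Your justification of disjointness via a measurable bijection also points the wrong way. Suppose $E(\sigma)$ were a single point $h(\sigma)$ for a computable measure-preserving bijection $h$. Then $h$ and $h^{-1}$ would both preserve Martin-L\"of randomness, and the theorem would immediately give KLR $=$ MLR, which is the open problem. A workable encoding must leave most positions of $\rho$ unconstrained. As a result, nothing supplies the hypothesis of the splitting theorem, namely that both halves of every $\rho\in E(\sigma)$ are non-MLR. The ``key lemma'' you postpone is exactly this missing content, and it is not stated precisely enough to be proved.

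The paper does not use a fixed split at all. Its encoding $e(s)=\mathtt{fa}_p(s)$, with $p(s)=2^{|s|}+v(s)$, writes each bit of $\sigma$ once, but at a position that depends on the whole preceding prefix. Under the odd/even split, every constrained position other than $p(\epsilon)=1$ has the parity of $\sigma[1]$. So one half of $\rho$ is essentially free and may well be MLR.

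Instead, the two strategies divide positions adaptively in each stage:
\begin{itemize}
\item The rewind strategy skips only $p(s)$. It checks the rest of $e(s')$ for a string $s'$ enumerated into $M$, then stakes all its $\nu$-mass on $\rho[p(s)]=s'[|s|+1]$, doubling its capital there.
\item Let $\tilde Z$ be the set of sequences on which an earlier such bet lost. The forward strategy has read the current $e(s')$ in order, and it concentrates its mass on $\tilde Z$.
\item $\tilde Z$ is small inside the forward strategy's cylinder. This is because the tails $e(s_i)[|s|+2:|s_i|]$ occupy positions disjoint from its assignment, so it reaches the threshold $T$ exactly where the rewind strategy lost.
\end{itemize}
The prefix-dependence of positions is what makes ``consistent with $e(s')$ away from $p(s)$'' informative. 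Your block scheme has no counterpart of this, so as it stands the proposal does not give a proof.
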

\begin{proof}[Proof by construction.]
We first define the encoding $E$.
\begin{definition}[Encoding]
For string $s$:\\ 
Function $v$ computes the integer value of $s$, $v(s):=2^0\cdot s[1] +\dots+ 2^{|s|-1}s[|s|]$\\
Function $p$ assigns a unique position to each $s$, $p(s):=2^{|s|}+v(s)$.\\
Function $e$ is $p$'s finite assignment for $s$, that is, $e(s):=\mathtt{fa}_p(s)$.\\
The encoder, $E$, is the (clopen) set of sequences consistent with the finite assignment for $s$, $E(\tilde s)=\tilde{e}(s)$.
\end{definition}
\begin{claim}\label{E_is_encoding}
Encoder preserves uniform Lebesgue measure and disjointedness relation 
\end{claim}
\begin{proof}
For any string $s$, $\lambda(E(\tilde s))=\lambda(\tilde e(s))=2^{-|e(s)|}=2^{-|s|}=\lambda(s)$.
For all strings $s_1,s_2$,
if $\tilde s_1$ is disjoint from $\tilde s_2$, then there is some 
string $s$ and bit $v$ such that $sv\preceq s_1$ and $s\bar v\preceq s_2$.
But then $e(s_1)$ assigns $v$  and $e(s_2)$ assigns $\bar v$ to position $p(s)$, and $e(s_1),e(s_2)$ are inconsistent.
Therefore  $E(\tilde s_1),E(\tilde s_2)$  are disjoint.
\end{proof}
\newcommand{\faa}{\mathtt{fa}_A}
\newcommand{\fab}{\mathtt{fa}_B}
\newcommand{\tfaa}{\tilde{\mathtt{fa}}_A}
\newcommand{\tfab}{\tilde{\mathtt{fa}}_B}
We will define the betting strategies $A,B$ in stages.
A stage starts with a pair of strings $a,b$, 
their finite assignments $\faa(a), \fab(b)$, 
and measures $\mu_A(a),\mu_B(b)$.
This stage will define the strategies $A,B$ on extensions of $a,b$.
As the stage computes $A,B$, for some of the pairs of extensions, the stage will stop, and the pair will exit the stage.
On such pairs, we start the next stage, and so on, defining $A,B$ entirely.
\newcommand{\faf}{\mathtt{fa}_F}
\newcommand{\far}{\mathtt{fa}_R}
\newcommand{\tfaf}{\tilde{\mathtt{fa}}_F}
\newcommand{\tfar}{\tilde{\mathtt{fa}}_R}
\begin{definition}[Stage]
A stage starts with a pair of strings
for which the finite assignments 
and measures of strategies $A,B$
have already been defined.
We require that the measures are non-zero and that strings are leafs of $A,B$ such that their finite assignments are equivalent, and even more, equivalent to a string.
Denote the input pair with $(a,b)$ and string $c\equiv\faa(a)\equiv\fab(b)$ .

Stage iteratively defines strategies $A,B$ on leafs extending $a,b$.
The capitals  on strings $a,b$ will determine how the bets are defined.
The strategy with the smaller capital will at first place bets on positions in the same order as the position function $p$, we'll call this the \textit{forward} strategy and label it with $F$.  
The strategy with the larger capital will at first skip the first position and place bets on other positions determined by $p$ and later return to the first position. 
We'll call this the \textit{rewind} strategy and label it with $R$.
So if $\dot\mu_A\leq\dot\mu_B$, then $F=A$, $R=B$, and $F,R$ will be defined on continuations of $f,r = a,b$.
In the other case, $F=B$, $R=A$, and $F,R$ will be defined on continuations of $f,r = b,a$.
We set the threshold $T$ for the capital that needs to be achieved
$T=\frac{3}{2}c_R(r)$.

We'll define the measure functions $\mu_F$, $\mu_R$ as sums of two measure functions. 
One, $\eta$, will assign measure evenly, this ensures that the strings always have non-zero measure assigned to them.
The other that we will define, $\nu$, will increase (or decrease) capital on some of the strings.
We set $\eta_F(\epsilon)=\nu_F(\epsilon)=\frac{1}{2}\mu_F(f)$ and
$\eta_R(\epsilon)=\nu_R(\epsilon)=\frac{1}{2}\mu_R(r)$.
We have
$\mu_F(f)=\eta_F(f)+\nu_F(f)$, $\mu_R(r)=\eta_R(r)+\nu_R(r)$.

Since the encoding's position function $p$, 
as the input strings become longer, chooses positions in increasing order, 
there exists a string $s$ such that $E(\tilde s)$ still contains $\tilde{c}$, and $p(s)>|c|$.
 We will computably enumerate a set of strings $M$, that are continuations of this $s$ and on which the universal lower semicomputable semimeasure $u$
 achieves capital larger than $H=\frac{2T}{ \lambda(s)\dot\nu_F(f)}$.
 More precisely, starting with empty $M$,
 we dovetail the computation of $u$ on all inputs, until finding $v\succ s$ such that
 $\dot u(v)\geq H$.
 If $\tilde v$ is already a subset of $\tilde M$, nothing new gets  added to $M$,
 otherwise, we will
 add one by one strings $s_1,\dots,s_n$ 
 such that $\tilde{s_1},\dots,\tilde{s_n}$ are disjoint and
 contain all of the sequences in $\tilde M\setminus\tilde{v}$.
 
 When we add a string $s'$ to $M$,
 we define finitely many new bets 
 in an \textit{iteration} of the stage .
Let $M'$ denote the strings that have been added in previous iterations.
The bets are defined only for the leafs of $F,R$ that don't extend strings that have exited the stage
after some previous iteration.
We denote these leafs $F^-,R^-$.
We define bets on strings that extend $F^-,R^-$ in three consecutive phases.
 \\
 \\ 
Phase F)\\
 For every $f^-\in F^-$, every leaf of $F$, $f'$, that extends $f^-$,
 if $\faf(f')$ is consistent, but doesn't refine $e(s')$,
 we define $\pi_F(f')$ to be the smallest position in $\pos(e(s'))$ on which $\faf(f')$ doesn't have assigned value.
 The mass is split evenly, $\nu_F(f'0)=\nu_F(f'1)=\frac{1}{2}\nu_F(f')$.
 We keep doing this until for every leaf $f'$ of $F$ that extends some $f^-\in F^-$,
 $\faf(f')$ is inconsistent with $e(s')$, or refines it.
\\
\\
Phase R)\\
For every $r^-\in R^-$, every leaf of $R$, $r'$, that extends $r^-$,
 if $\far(r')$ is consistent, but doesn't refine $e(s')$,
 we define $\pi_R(r')$ to be the smallest position larger than $p(s)$ in $\pos(e(s'))$ on which $\faf(r')$ doesn't have assigned value, if there is such a position.
 In this case, the mass is split evenly, $\nu_R(r'0)=\nu_R(b'1)=\frac{1}{2}\mu_R(r')$.
 In the other case, $p(s)$ is the only position in $\pos(e(s'))$ on which $\faf(r')$ doesn't have assigned value.
 We define $\pi_R(r')=p(s)$. Let $v=s'[|s|+1]$, and set $\nu_R(r'v)=\nu_R(r')$, $\nu_R(r'\bar{v})=0$. 
 We repeat defining bets like this until for every leaf $r'$ of $R$ that extends some $r^-\in R^-$ we have that 
 $\far(r')$ is inconsistent from $e(s')$, or refines it.
\\

Let $F',R'$ denote the leafs extending $F^-,R^-$ after phases F), R), whose finite assignments are inconsistent
with $e(s')$, and $F^+,R^+$ the ones that refine $e(s')$. Let $Z$ be the strings in $R^-$ that have measure $\nu_R$ zero, and denote the set of sequences consistent 
 with their finite assignments with $\tilde Z = \bigsqcup_{z\in Z} \tfar(z)$.
 \\
 \\
Phase C)\\
 Let $g$ be the maximal position in finite assignments of leafs of both strategies.
Let $r^+$ be a leaf in $R^+$,
Let $G$ be a subset of $[1,g]$ on which  $\far(r^+)$ does not assign a value.
For each string $x$ of length $|G|$, we define $\pi_R$ so that $\far(r^+x)$ assigns values to all positions in $[1,g]$. The mass $\nu_R$ is distributed evenly,
$\dot\nu_R(r^+x)=\dot\nu_R(r^+)$.
 Let $f^+$ be a leaf in $F^+$.
 Like before, let $G$ be a subset of $[1,g]$ on which  $\faf(f^+)$ does not assign a value,
 For each string $x$ of length $|G|$, we define $\pi_F$ so that $\faf(f^+x)$ assigns values to all positions in $[1,g]$.
 This is the same as what we did in this phase for $R$ when defining $\pi_R$.
 However, the measure $\nu_F$ is not always distributed evenly.
 
 In case $\tfaf(f')$ intersects $\tilde Z$ we assign measure the following way.
 If $\tfaf(f'x)\subseteq \tilde Z$ then $\nu_F(f'x)=\frac{2^{-|x|}}{\lambda(\tilde Z|\tfaf(f'))}\nu_F(f')$, and if not
 then $\nu_F(f'x)=0$.
 In case $\tfaf(f')$ is disjoint from $\tilde Z$, the measure $\nu_F$ is distributed evenly among continuations of $f'$.
\\
 
 Let $F^\circ,R^\circ$ be the leafs extending strings in $F^+,R^+$, after phase C).
 The strings in $F^\circ,R^\circ$ exit the stage, $s'$ is added to $M'$, and the stage then continues with the 
 adding of strings to $M$.
\end{definition}

In the remainder of the proof, we use the notation from the above definition, and
in addition to the notation used for an arbitrary iteration,
for $i$-th iteration we denote the input string with $s_i$,
$M_{i-1}=\{s_1,\dots,s_{i-1}\}$  is the set strings that were enumerated into $M$ in the first $i-1$ iterations,
$F^-_{i-1},R^-_{i-1}$  the leafs in $F,R$, before any of the bets for $s_i$ were defined, that do not extend some string that exited stage after the first $i-1$ iterations,
$F'_i,R'_i,F^+_i,R^+_i$ the leafs extending $F^-_{i-1},R^-_{i-1}$  after phases F) R) are finished, and
with $F^\circ_i,R^\circ_i$ the leafs that exit the stage.

From the definition of stage we have $M_0=\{\}, F^-_0=\{f\}, R^-_0=\{r\}$, 
and $\tilde F^+_{i} \sqcup \tilde F'_{i}=\tilde F^-_{i-1}$, $\tilde R^+_{i} \sqcup \tilde R'_{i}=\tilde R^-_{i-1}$,
$\tilde F^\circ_i=\tilde F^+_i$, $\tilde R^\circ_i=\tilde R^+_i$,
$F^-_i=F'_i$, $R^-_i=R'_i$, $M_i=M_{i-1}\cup\{s_i\}$.

\begin{claim}\label{cl_F}
Let $f^-\in F^-$.
\begin{enumerate}[label=(\arabic*), ref=\theclaim(\arabic*)]
\item If $\faf(f^-)$ inconsistent with $e(s')$ then $F'|f^-=\{f^-\}$.\label{cl_F:f-inc_e}

\item If $\faf(f^-)$ consistent with $e(s')$, then for 
every $f'\in F'|f^-$ there is some
string $x$, and bit $v$ such that $s\preceq xv\preceq s'$  and \\ 
$\faf(f')=\faf(f),e(x\bar v)[|s|+1:|x\bar v|]$ and 
$\dot\nu_F(f')=\dot\nu_F(f)$.\label{cl_F:f-con_e_f'}

\item If $\faf(f^-)$ consistent with $e(s')$, then there is exactly one 
$f^+\in F^+|f^-$, and $\faf(f^+)=\faf(f),e(s')[|s|+1,|s'|]$ and 
$\dot\nu_F(f^+)=\dot\nu_F(f)$. \label{cl_F:f-con_e_f+}
 

\item If $\tfaf(f^+)$ intersects $\tilde Z$ then for all
$f^\circ\in F^\circ|f^+$,
if, for some $r^-\in Z$, $\faf(f^\circ)$ refines $\far(r^-)$
then $\dot\nu_F(f^\circ)=\dot\nu_F(f^-)/\lambda(\tilde Z| \tfaf(f^+))$
else $\nu_F(f^\circ)=0$. \label{cl_F:f-con_e_Z_fo}
\end{enumerate}
\end{claim}
\begin{proof}
By definition of phase F), if $\faf(f^-)$ is inconsistent with $e(s')$, then no new bets are defined for $f^-$, and
$f^-$ does not exit the stage, therefore \ref{cl_F:f-inc_e}.

For $i=0$, $F^-_i=\{f\}$, and we have that for every $f^-\in F^-_i$, $f^-$ fulfills the condition that
there is some string $y\succeq s$  such that $\faf(f^-)$ is 
equal to $\faf(f),e(y)[|s|+1:|y|]$ and $\dot\nu_F(f^-)=\dot\nu_F(f)$, to see this set $y=s$.
Assume that the same is true for some $i>0$.
Let $f^-\in F^-_i$.
If $\faf(f^-)$ is inconsistent with $e(s_{i+1})$, then by \ref{cl_F:f-inc_e},
$F'_i|f^-=\{f^-\}$, and the condition is true for all $f'\in F'_i|f^-$.
If $\faf(f^-)$ is consistent with $e(s_{i+1})$, then for some $y\preceq s_{i+1}$
$\faf(f^-)=\faf(f),e(y)[|s|+1:|y|]$ and $\dot\nu_F(f^-)=\dot\nu_F(f)$.
Let $\ell=|s_{i+1}|-|y|$, and let $v_1, \dots, v_\ell$ be the sequence of bits that extend $y$ to $s_{i+1}$,
that is, $s_{i+1}=y v_1 \dots v_{\ell}$
By definition of phase F),
$F'_{i+1}|f^-=\{f^-\bar v_1,f^-v_1\bar v_2,\dots, f^- v_1\dots v_{\ell-1}\bar v_{\ell} \}$,
and for every $f'\in F'|f^-$, there is some $n\leq \ell$, $f'=f^- v_1\dots v_{n-1}\bar v_{n} $ 
and $\faf(f' )$ is 
equivalent to $\faf(f),e(yv_1\dots v_{n-1}\bar v_{n})[|s|+1:|y|+n]$ and $\dot\nu_F(f')=\dot\nu_F(f)$.
Since $F'_{i+1}=\bigsqcup_{f^-\in F^-_i} F'_{i+1}|f^-$, and $F^-_{i+1}=F'_{i+1}$,
we have that for every $f^-\in F^-_{i+1}$, $f^-$ fulfills the condition, and \ref{cl_F:f-con_e_f'} follows by induction.

By \ref{cl_F:f-con_e_f'} for $f^-\in F^-$
there is some string $y\succeq s$  such that $\faf(f^-)$ is 
equal to $\faf(f),e(y)[|s|+1:|y|]$ and $\dot\nu_F(f^-)=\dot\nu_F(f)$.
If $f^-$ is consistent with $e(s')$ then $y\preceq s'$. 
Let $w$ denote the extension of $y$ to $s'$, that is, $yw=s'$.
By definition of phase F), $f^+=f^-w$, and
$\faf(f^+)=\faf(f),e(yw)[|s|+1:|yw|]$ and 
$\dot\nu_F(f^+)=\dot\nu_F(f)$, proving \ref{cl_F:f-con_e_f+}.

From definition of phase C), 
$f^\circ\in F^\circ$ assigns to all positions up to $g$,
the largest position assigned by a leaf of either strategy, before phase C) started.
Then for any $r^-\in R^-$, $\faf(f^\circ)$ either refines $\far(r^-)$, or is disjoint from it.
Therefore if $\tfaf(F^\circ)\subseteq \tilde Z$, there is some $r^-\in Z$, such that
$\faf(f^\circ)$ refines $\far(r^-)$.
The value for $\nu_F(f^\circ)$ in
\ref{cl_F:f-con_e_Z_fo} then follows from definition of phase C).

\end{proof}

\begin{claim}\label{cl_R}
Let $r^-\in R^-$.

\begin{enumerate}[label=(\arabic*), ref=\theclaim(\arabic*)]
\item If $\far(r^-)$ inconsistent with $e(s')$ then $R'|r^-=\{r^-\}$.\label{cl_R_inc_e}

\item If $\far(r^-)$ consistent with $e(s')$ and $p(s)\in\pos_R(r^-)$
then for every $r'\in R'|r^-$, $\dot\nu_R(r')=\dot\nu_R(r^-)$.
For $r^+\in R^+|r^-$, $\far(r^+)$ refines $e(s')$ and $\dot\nu_R(r^+)=\dot\nu_R(r^-)$.
For every $r^\circ\in R^\circ|r^+$, $\dot\nu_R(r^\circ)=\dot\nu_R(r^-)$. \label{cl_R_con_e_p}

\item If $\far(r^-)$ consistent with $e(s')$ and $p(s)\not\in\pos_R(r^-)$,
then for every $r^+\in R^+|r^-$, $\far(r^+)$ refines $e(s')$
and $\dot\nu_R(r^+)=2\dot\nu_R(r^-)$, 
and for all $r^\circ\in R^\circ|r^+$ $\dot\nu_R(r^\circ)=\dot\nu_R(r^+)$. \label{cl_R_con_e_nop_r+}

\item If $\far(r^-)$ consistent with $e(s')$ and $p(s)\not\in\pos_R(r^-)$,
there is exactly one string $z$ in $R'|r^-$
such that $\pos(e(s'))\subseteq \pos_R(z)$. $\far(z)$ assigns negation of $s'[|s|+1]$
to position $p(s)$ and $\nu_R(z)=0$.
For all other $r'\in R'|r^-$,
$\far(r')$ does not assign a value to $p(s)$, and $\dot\nu_R(r')=\dot\nu_R(r^-)$.  \label{cl_R_con_e_nop_r'}
\end{enumerate}
\end{claim}
\begin{proof}
By definition of phase R), if $\far(r^-)$ is inconsistent with $e(s')$, then no new bets are defined for $r^-$, and
$r^-$ does not exit the stage, therefore \ref{cl_R_inc_e}.

By definition of phase R), $\dot\nu_R$ changes only in the bet on position $p(s)$, 
if $\far(r^-)$ already assigns to $p(s)$, $\dot\nu_R$  remains the same for all strings that extend it.
In phase R) new bets are defined for strings that have finite assignment consistent with $e(s')$,
until all of the positions in $\pos(e(s'))$ are assigned to, therefore $\pos(e(s')) \subseteq \pos_R(r^+)$
and
$\far(r^+)$ is consistent with $e(s')$ implies $\far(r^+)$ refines $e(s')$, proving \ref{cl_R_con_e_p}.

On the other hand, if $\far(r^-)$ does not assign a value to $p(s)$, then there is
a single string $x$ that extends $r^-$ such that $\far(x)$ is consistent with $e(s')$ 
and assigns to all positions in $\pos(e(s'))$ except $p(s)$.
Let $v$ be the value that $e(s')$ assigns to position $p(s)$, that is, $v=s'[|s|+1]$.
A new bet is defined for $x$ on position $p(s)$, 
then $\{xv\}=R^+|r^-$, $\far(xv)$ refines $e(s')$ and $\dot\nu_R(xv)=2\dot\nu_R(r^-)$,
and $\far(x\bar v)$ inconsistent with $e(s')$ and $\nu_R(x\bar v)=0$,
proving \ref{cl_R_con_e_nop_r'},\ref{cl_R_con_e_nop_r+}.
\end{proof}

\begin{claim}\label{cl_R-}
For every $r^-\in R^-$ either $\far(r^-)$
doesn't assign to $p(s)$ and has $\dot\nu_R(r^-)=\dot\nu_R(r)$
or $\far(r^-)$ assigns to $p(s)$ and has $\nu_R(r^-)=0$.
\end{claim}
\begin{proof}
For $i=0$, we have that $R^-_i=\{r\}$ and for every $r^-\in R^-_i$ either $\far(r^-)$
doesn't assign to $p(s)$ and has $\dot\nu_R(r^-)=\dot\nu_R(r)$
or $\far(r^-)$ assigns to $p(s)$ and has $\nu_R(r^-)=0$.
Assume this is true for some $i>0$.
If for $r^-\in R^-_i$ $\far(r^-)$ assigns to $p(s)$ and has $\nu_R(r^-)=0$,
then for any string $r'$ that extends $r^-$, and in particular
all $r'\in R'|r^-$, $\far(r')$ assigns to $p(s)$ and  $\nu_R(r^-)=0$.
On the other hand, if for $r^-\in R^-_i$ $\far(r^-)$ 
doesn't assign to $p(s)$ and has $\dot\nu_R(r^-)=\dot\nu_R(r)$,
in case $\far(r^-)$ is inconsistent with $e(s')$, by claim \ref{cl_R_inc_e}, the same is true 
for all $R'|r^-$.
In case $\far(r^-)$ is consistent with $e(s')$, by claim \ref{cl_R_con_e_nop_r'},
for every $r'\in R'|r^-$
either $\far(r')$
doesn't assign to $p(s)$ and has $\dot\nu_R(r')=\dot\nu_R(r)$
or $\far(r')$ assigns to $p(s)$ and has $\nu_R(r')=0$.
Since $R'_{i+1}=\bigsqcup_{r^-\in R^-_i} R'|r^-$, and $R^-_{i+1}=R'_{i+1}$.
we have that 
the statement is true for $i+1$, and by induction for arbitrary iteration.
\end{proof}

\begin{lemma}\label{lem_Z|f+}
For $f^+\in F^+$, $\lambda(\tilde Z|\tfaf(f^+))\leq \frac{\dot\nu_F(f)}{T}$
\end{lemma}
\begin{proof}
Note that by claim \ref{cl_R}, the only case when a string is assigned $\nu_R$ zero is in case
\ref{cl_R_con_e_nop_r'}.
Therefore, if $r'\in Z$ then for some previous iteration $i$
there is an $r^-_i,r'_i\in R^-_i,R'_i$, $r^-_i\preceq r'_i\preceq r'$, $\nu_R(r^-_i)>\nu_R(r_i)=0$,
and
$\far(r_i)$ assigns $\bar v$ to  $p(s)$, where $v$ is the $|s|+1$-st bit of $s_i$ 
and
we have $\pos(e(s_i))\subseteq \pos_R(r_i)\subseteq \pos_R(r')$.
Let $Z_i$ denote the subset of $Z$ such that for $r'\in Z_i$, there is
$r^-_i,r^+_i\in R^-_i,R^+_i$, $r^-_i\preceq r^+_i\preceq r'$, $\nu_R(r^-_i)>\nu_R(r_i)=0$.
Suppose we are in the $n$-th iteration. We have $Z=\bigsqcup_{i<n} Z_i$.
Then
$\lambda(\tilde Z\cap\tfaf(f^+))=\sum_{i<n}\lambda(\tilde Z_i\cap\tfaf(f^+))$.

By claim \ref{cl_F:f-con_e_f+} $\faf(f^+)$ is equivalent to $\faf(f),e(s')[|s|+1:|s'|]$,  and for all $i<n$ it is inconsistent with $e(s_i)$,  and $\pos_F(f^+)=\pos(e(s'))\cup [1,|c|]$.
Let $r'\in Z_i$  such that $\far(r'),\faf(f^+)$ consistent, and let $v$ be the $|s|+1$-st bit of $s_i$.
We have that $|s|+1$-st bit of $s'$ is $\bar v$, and since encoding's position function $p$ assigns a unique position to each string,
$|\pos(e(s_i))\setminus \pos(e(s_n))|=|s_i|-|s|-1$.
Denote with  $h_i$ the last $|s_i|-|s|-1$ assignments of $e(s_i)$, that is,
$h_i=e(s_i)[|s|+2:|s_i|]$. We have that $\pos(h_i)$ is disjoint from $\pos_F(f^+)$ and every $r'\in Z_i$ refines
$h_i$. Therefore
$\lambda(\tilde Z\cap\tfaf(f^+))\leq\sum_{i<n}\lambda(\tilde h_i)\lambda(\tfaf(f^+))=2\frac{\lambda(\tilde M_{n-1})}{\lambda(s)} \lambda(\tfaf(f^+))$.
The result follows from $\lambda(\tilde M_{n-1})\leq \lambda(\tilde M)\leq \frac{1}{H}=\frac{\lambda(s)\dot\nu_F(f)}{2T}$.
\end{proof}
%
%
For a set of mutually inconsistent strings $X$ and strategy $Y$ 
we abbreviate $\bigsqcup_{x\in X}\tilde{\mathtt{fa}}_Y(x)$ 
to $\tilde{\mathtt{fa}}_Y(X)$.

\begin{lemma}\label{lem_FR_exit_seqs}
$\tfaf(F^\circ) =\tfar(R^\circ)= E(\tilde s')\cap \tilde c$, and
for every $f^\circ \in F^\circ$ there is a $r^\circ \in R^\circ$ and
string $c'\succ c$
such that $\faf(f^\circ)\equiv\far(r^\circ)\equiv c'$,
and $\max(\dot\mu_F(f^\circ),\dot\mu_R(r^\circ))\geq T$,
$\min(\dot\mu_F(f^\circ),\dot\mu_R(r^\circ))\geq 0$.
\end{lemma}
\begin{proof}
We have $\tilde F^-=\tilde F^+ \sqcup \tilde F'$,
every $f'\in F'$ has $\faf(f')$ inconsistent with $e(s')$,
$\tilde F^+ = \tilde F^\circ$ implies $\tfaf (F^+)=\tfaf (F^\circ)$
therefore
$\tfaf (F^\circ) = \tfaf (F^-)\cap E(\tilde s')$. 
Similarly, 
$\tfar (R^\circ) = \tfar (R^-)\cap E(\tilde s')$.
For first iteration,
$\tfaf (F_1^\circ) = \tfaf (F_0^-)\cap E(s_1)=\tilde c \cap E(s_1)=\tfar (R_0^-)\cap E(\tilde s_1) = \tfar (R_1^\circ)$.
Assume that for all iterations $i\leq n$ we have  $\tfaf(F_i^\circ) =\tfar(R_i^\circ)= E(\tilde s_i)\cap \tilde c$.
We have $\tfaf(F^-_n)=\tfaf(R^-_n)=\tilde c \setminus \sqcup_{i\leq n}E(\tilde s_i)$.
Since $ \sqcup_{i\leq n} E(\tilde s_i)$ is disjoint from $E(\tilde s_{n+1})$, we have
$\tfaf(F_{n+1}^\circ) =\tfar(R_{n+1}^\circ)= E(\tilde s_{n+1})\cap \tilde c$.
By induction, we have that for arbitrary iteration, $\tilde F^\circ = \tilde R^\circ = \tilde c \cap E(s')$.

Since for any $f^\circ\in F^\circ,r^\circ\in R^\circ$, $\faf(f^\circ),\far(r^\circ)$ assign to positions in $[1,g]$, $\faf(f^\circ),\far(r^\circ)$ are either inconsistent or equivalent.
Then it cannot be that for some $f^\circ\in F^\circ$ there is no $r^\circ\in R^\circ$ with an equivalent finite assignment,
because then $\tfaf(F^\circ) \neq \tfar (R^\circ)$.

It remains to show that for each pair $f^\circ\in F^\circ,r^\circ\in R^\circ$
where $\faf(f^\circ)\equiv\far(r^\circ)$ we have
 $\max(\dot\nu_F(f^\circ),\dot\nu_R(r^\circ))\geq T$.
 
By claim \ref{cl_F:f-con_e_f'}, for all $f^-\in F^-$, $\dot\nu_F(f^-)=\dot\nu_F(f)$.
Let $f^\circ\in F^\circ, r^\circ\in R^\circ$ and $\faf(f^\circ)\equiv \far(r^\circ)$.
Then there is $f^-\in F^-, f^+\in F^+|f^-,r^-\in R^-$ 
such that $f^\circ\in F^\circ|f^+, r^\circ\in R^\circ|r^-$
and $\faf(f^-),\far(r^-)$ both consistent with $e(s')$.
If $\far(r^-)$ does not assign to $p(s)$, then by claim \ref{cl_R-},
$\dot\nu_R(r^-)=\dot\nu_R(r)$, and by claim \ref{cl_R_con_e_nop_r+},
$\dot\nu_R(r^\circ)=2\dot\nu_R(r)=T$.
On the other hand, if $\far(r^-)$ does  assign to $p(s)$, then by claim \ref{cl_R-},
$\nu_R(r^-)=0$. Since for arbitrary $i$, $F'_i=F^-_i$, by claim \ref{cl_F:f-con_e_f'}, $\dot\nu_F(f^-)=\dot\nu_F(f)$ and by claim \ref{cl_F:f-con_e_Z_fo},
$\dot\nu_F(f^\circ)=\dot\nu_F(f)/\lambda(\tilde Z |\faf(f^+))$.
This is by lemma \ref{lem_Z|f+} larger than $T$.
Therefore $\max(\dot\nu_F(f^\circ),\dot\nu_R(r^\circ))\geq T$.
We have that $\dot\eta_F(f^\circ)=\dot\eta_F(f)$, $\dot\eta_R(r^\circ)=\dot\eta_R(r)$.
Since $\mu$ is a sum of $\eta$ and $\nu$, and the stage requires that  $\mu_F(f),\mu_R(r)$ are larger than zero,
the result follows.
\end{proof}

If $F=A,R=B$
let $C_i$ denote the set of all pairs of strings $(a', b')\in F^\circ_i \times R^\circ_i$
such that $\faa(a'),\fab(b')$ are equivalent to a same string.
If $F=B,R=A$, then $C_i$ are pairs $(a', b')\in R^\circ_i \times F^\circ_i$ that are equivalent to a same string.
For a pair in $C_i$ we say that the pair exits the stage in $i$-th iteration.
Let $C$ be the set of pairs that exit the stage in some iteration, $C=\bigcup_{i}C_i$, 
and let $\tilde C$ be the computably enumerable open set of 
sequences consistent with finite assignments of strings that exit the stage,
$\tilde C = \bigsqcup_{(a', b')\in C}\tfaa(a')$.

\begin{proposition}\label{prop_stage}
Let $(a,b)$ be a pair of strings on which a stage is run.
Let $c$ be the string equivalent to $\faa(a),\fab(b)$.
$\tilde C$ contains all sequences that extend $c$ and are contained in an encoding of a non-MLR sequence.
For each $(a',b')\in C$, $\max(\dot\mu_A(a'),\dot\mu_B(b'))\geq\frac{3}{2}\max(\dot\mu_A(a),\dot\mu_B(b))$,
$\min(\dot\mu_A(a'),\dot\mu_B(b'))\geq 0$,
and there is some string $c'\succeq c$ such that $\faa(a')\equiv\fab(b')\equiv c'$.
\end{proposition}
\begin{proof}
Let  $\sigma$ be a non-MLR sequence that extends $s$.
Let $\rho$ be a sequence that extends $c$ and is contained in $E(\sigma)$.
Since $\sigma$ is non-MLR, it must have a prefix $s'$ that extends $s$,
and $\dot u(s')\geq H$. Then $s'\in M$.
We have
$\tilde C =\bigcup_{i} \bigsqcup_{(a', b')\in C_i}\tfaa(a') $.
By lemma \ref{lem_FR_exit_seqs}, 
$\tilde C =\bigcup_{i}\tilde c\cap E(s_i)$.
This is equal to $\tilde c\cap \bigcup_{s'\in M}E(\tilde s')$, and $\tilde C$ contains $\rho$.
The rest of the result follows from that same lemma.
\end{proof}

We recursively define entire $A,B$.
\begin{definition}
We set $\mu_A(\epsilon)=\mu_B(\epsilon)=1$
and $\tfaa(\epsilon)=\tfab(\epsilon)=\tilde{\epsilon}$
and run a stage on $(\epsilon,\epsilon)$.
For a pair of strings that exits any running stage,
we run another stage on the pair.
\end{definition}

Suppose that some $\rho\in E(\sigma)$ is consistent with finite assignments of only finitely many pairs that have exited some stages.
For the last  such pair, the stage is run on the pair, and $\rho$ is consistent with the pair's finite assignments,
but not with the finite assignments of pairs that exit the stage. By proposition \ref{prop_stage},
$\rho$ is not contained in the encoding of a non-MLR sequence.

Therefore if $\sigma$ is non-MLR, for every $\rho\in E(\sigma)$ 
there is an infinite sequence of pairs 
$(\epsilon,\epsilon),(a_1,b_1),\dots$
such that for all $n$
$(a_n,b_n)$ exits the stage run on $(a_{n-1},b_{n-1})$ and 
$\rho$ is consistent with their finite assignments.
By proposition \ref{prop_stage},
$\max(\dot\mu_A(a_n),\dot\mu_B(b_n))\geq\frac{3}{2}^n$.
Then either $\sup_n(\dot\mu_A(a_n))=\infty$, or $\sup_n(\dot\mu_B(b_n))=\infty$ (or both),
and one or both strategies win on $\rho$.
Since the set of non-Martin-L\"of sequences is everywhere dense, the strategies are total.
\end{proof}

\section{Concluding remarks}
We have constructed a total computable encoding $E$
that allows us to construct a test for Martin-L\"of randomness
from a Kolmogorov-Loveland randomness test.
A sequence $\alpha$ is Martin-L\"of random iff  a sequence in $E(\alpha)$ is Kolmogorov-Loveland random.
This result goes against the Schnorr's objection on lower-semicomputability of Martin-L\"of randomness.

We have also shown that no such encoding exists for partial computable randomness and
it is an
\begin{openq}
Which other known notions of randomness implied by Martin-L\"of randomness have this property?
\end{openq}

The construction of stages is similar to the proof of \cite[Theorem 5.9]{vanLambalgen} insofar that it involves constructing
two probability distributions on binary sequences that are largely independent, except on the basic sets enumerated into a level of a universal Martin-L\"of test. For these basic sets one of the distributions has density (capital) above a threshold, and then the distribution pair is further constructed, recursively, on these basic sets. 
Note that a similar construction cannot be done for Kolmogorov-Loveland randomness (without the encoding), \cite{TP}.
The constructions in \cite[Theorem~9.1]{Muchnik},  \cite[Theorem~12]{Merkle}, also construct two largely independent probability distributions, however the effective nullset on which they win is not universal.
There, like in our construction, the positions are being split into two sets, where one strategy first bets on positions in the first set, and the other one first bets on positions in the second set. However, the splitting is fixed, and not chosen adaptively.


\begin{thebibliography}{00}	
\bibitem{ShenBook} A. Shen, V. A. Uspensky, and N. Vereshchagin: Kolmogorov Complexity and Algorithmic Randomness,
	ISBN-10: 1-4704-3182-3
	
\bibitem{MartinLof}P. Martin-L\"of. The definition of random sequences.
	Inform. Contr., 9:602\textendash 619, 1966.

\bibitem{Schnorr}C.P. Schnorr. Zuf\"alligkeit und Wahrscheinlichkeit; Eine algorithmische Begr\"undung der Wahrscheinlichkeitstheorie, volume 218 of Lect. Notes Math. Springer-Verlag, Berlin, 1971.

\bibitem{Klaus}Klaus Ambos-Spies. Algorithmic randomness revisited, in:B.McGuinness(Ed.), Language,Logic and Formalization of Knowledge. Coimbra Lecture and Proceedings of a Symposium held in Siena in September 1997, Bibliotheca, 1998, pp.33–52.

\bibitem{Muchnik}Andrei A. Muchnik, Alexei L. Semenov, Vladimir A.
	Uspensky: Mathematical Metaphysics of Randomness. Theor. Comput. Sci.
	207(2): 263-317 (1998)
	
\bibitem{Merkle}Wolfgang Merkle, Joseph S. Miller, Andr\'e Nies, Jan
	Reimann, Frank Stephan: Kolmogorov-Loveland randomness and stochasticity.
	Ann. Pure Appl. Logic 138(1-3): 183-210 (2006)
	
\bibitem{Downey}Rodney G. Downey, Denis R. Hirschfeldt: Algorithmic
	Randomness and Complexity. Theory and Applications of Computability,
	Springer 2010, ISBN 978-0-387-95567-4, pp. 1-766
	
\bibitem{Nies} André Nies. Computability and randomness, volume 51 of Oxford Logic Guides. Oxford University Press, Oxford, 2009.

\bibitem{vanLambalgen} Van Lambalgen, M. (1990). The axiomatization of randomness. Journal of Symbolic Logic, 55(3), 1143-1167.

\bibitem{TP}Tomislav Petrović,
Kolmogorov-Loveland betting strategies lose the Betting game on open sets,
Theoretical Computer Science,
Volume 1037,
2025,
115177,
ISSN 0304-3975
\end{thebibliography}
\end{document}